\documentclass[%
 reprint,
 amsmath,amssymb,
 aps, prx,
]{revtex4-2}

\usepackage{graphicx}
\usepackage{dcolumn}
\usepackage{bm}
\usepackage{float}

\usepackage{algpseudocode}
\usepackage{algorithm}
\usepackage{physics}
\usepackage{color}
\usepackage{braket}
\usepackage[colorlinks=true, linkcolor=blue, citecolor=blue, urlcolor=blue]{hyperref}
\usepackage[capitalise]{cleveref}
\usepackage[all]{hypcap}

\usepackage{amsmath} 						
\usepackage{amssymb} 						
\usepackage{mathtools}			
\usepackage{mathrsfs}                       
\usepackage{amsthm}
\usepackage{amssymb}
\usepackage{hyperref}
\hypersetup{
    colorlinks = true,
    linkcolor = blue,
    allcolors=blue
    }
\usepackage{placeins}
\newtheorem{theorem}{Theorem}

\usepackage{algpseudocode}

\begin{document}
\global\setlength{\abovedisplayskip}{3pt}
\global\setlength{\belowdisplayskip}{3pt}
\global\setlength{\abovedisplayshortskip}{3pt}
\global\setlength{\belowdisplayshortskip}{3pt}
\setlength{\abovecaptionskip}{1pt}   
\setlength{\belowcaptionskip}{1pt}   
\thinmuskip=2mu
\preprint{APS/123-QED}

\title{On Minimizing Krylov Complexity Using Higher-Order Generators}


\author{Saud \v{C}indrak}
\email{saud.cindrak@tu-ilmenau.de}
\affiliation{%
Institute of Physics, Technische Universität
Ilmenau, Ilmenau, Germany
}%

\author{Kathy L{\"u}dge}
\affiliation{%
Institute of Physics, Technische Universität
Ilmenau, Ilmenau, Germany
}%

\date{\today}

\begin{abstract}
Krylov complexity provides a powerful framework for characterizing the dynamical evolution of quantum systems through the spreading of states in Krylov space. The motivation for this is rooted in the optimality of the Krylov basis for the analyzed cost function. In this work, we reinterpret the motivation for the Krylov basis from a dynamical perspective and show that it corresponds to a first-order approximation of the time-evolution operator. We extend this framework to higher-order generators and analytically disprove the optimality assumption by showing that an infinite-order generator can be constructed to exhibit smaller spread for arbitrary times. We propose a natural time scale for the construction of these higher-order generators and discuss results for matrices sampled from Gaussian Unitary Ensembles, demonstrating smaller Krylov complexity at all higher orders. These results extend the framework of Krylov complexity beyond the conventional Krylov basis by disproving the widely held assumption of optimality, extending the construction to higher-order generators, and introducing a physically motivated method for their construction. Our findings therefore suggest that previous statements and results concerning Krylov complexity may need to be reconsidered.
\end{abstract}

\maketitle

\textit{Introduction---}   
Krylov complexity \cite{PAR19} and its extension, Krylov spread complexity \cite{BAL22}, provide a framework for studying the dynamical evolution of quantum systems. Starting from an initial state $\ket{\psi_0}$ and a Hamiltonian $H$, one considers the Krylov space $\{H^n\ket{\psi_0}\}_{n=0}^{m-1}$, where $m$ denotes its grade. Orthonormalizing this set yields the Krylov basis, where states deeper in the space correspond to larger complexity. The resulting Krylov spread complexity characterizes how the state spreads under time evolution and exhibits characteristic dynamical signatures, such as a pronounced peak for chaotic systems before reaching a late-time average.
Due to these properties, Krylov operator and spread complexity have received considerable attention in recent years, with applications including quantum chaos \cite{PAR19, BAL25a, BAL22, ERD23}, Sachdev--Ye--Kitaev models \cite{BAL22, BHA24a, JIA21}, driven quantum systems \cite{NIZ23}, open quantum systems \cite{BHA22a, LIU23b}, unitary circuits \cite{CIN24a, SUC25}, time-dependent generators \cite{TAK25a}, and quantum machine computing \cite{DOM24, CIN25, CIN25a}, among many others \cite{AGU24, ALI23, ANE24, BAR19, CAM23, CAP24, CAO21, CHA23, CRA24, DYM21, FAN22, GAU24, GIL23a, GUO22A, HAS23, HE22, HEV22, HUH24, IIZ23, KIM22, LI24a, LIN22, MAG20, MUC22, NAN24, NIE06a, PAL23, PAT22, RAB23, VAS24, CIN24a, ZHO25b, SUC25, FAR26}. \cite{NAN25} is a review article that covers methods and applications.
The original work proposing Krylov spread complexity shows that the Krylov basis minimizes the associated cost function, providing both a physical and an information-theoretic motivation for its use \cite{BAL22}. As a result, nearly all previous studies employ this basis when computing Krylov spread complexity.

In this work, we propose an alternative motivation for the Krylov basis. We first show that Krylov spread complexity, also referred to as Krylov complexity or spread complexity, arises from a first-order approximation of the time evolution. This observation allows us to construct generators based on higher-order approximations of the time evolution, yielding a family of Krylov spaces parametrized by a time scale $\Delta t$. The infinite-order generator corresponds to the time-evolution operator up to time $\Delta t$  \cite{CIN24,CIN25, CIN25a}. In this way, the order interpolates between Hermitian first-order and unitary infinite-order generators. 
We then analytically disprove the assumption that the standard Krylov basis minimizes Krylov complexity \cite{BAL22}. In particular, we show that for any time an infinite-order generator can be constructed that yields a smaller Krylov complexity (see \cref{theorem_1}). These results are supported by numerical simulations of systems sampled from the Gaussian Unitary Ensemble (GUE).
Finally, we introduce a physically motivated choice of $\Delta t$ based solely on the spectral statistics of the quantum system and study its effect for a Gaussian unitary ensemble of dimension $N=50$. While increasing $\Delta t$ breaks tridiagonality, all higher-order generators consistently exhibit smaller Krylov complexity. Interestingly, their dynamics show strikingly similar behavior to the first-order generator, suggesting a broader framework for spread complexity.
With this work, we provide an interpretation of Krylov complexity from a dynamical perspective, where it emerges from a first-order generator, and show that the optimality of the Krylov basis does not generally hold, implying the need to reconsider its interpretation from an information-theoretic perspective. Finally, the introduction of higher-order generators naturally extends the framework of Krylov complexity, thereby opening new research directions in the study of operator dynamics.

\textit{Krylov Complexity---}
The time evolution in natural units ($\hbar =1$) of a quantum state $\ket{\psi(t)}$ satisfies the Schr\"odinger equation 
\begin{align}
\partial_t \ket{\psi(t)} = -iH\ket{\psi(t)}, \qquad \ket{\psi(0)} = \ket{\psi_0},
\label{eq:schroedinger}
\end{align}
where $H$ is the Hamiltonian generating the dynamics. The solution can be written as
\begin{align*}
\ket{\psi(t)} = e^{-iHt}\ket{\psi_0} = \sum_{k=0}^\infty \frac{(-iHt)^k}{k!}\ket{\psi_0},
\end{align*}
showing that time evolution lies in the span of $\{\ket{\psi_0}, H\ket{\psi_0}, H^2\ket{\psi_0},\ldots\}$. For a Hilbert space with finite dimension, there exists an $m\le\mathcal{H}$ such that
\begin{align}
\ket{\psi(t)} \in \mathrm{K}_m = \mathrm{Span}\{\ket{\psi_0},\,H\ket{\psi_0},\,\ldots,\,H^{m-1}\ket{\psi_0}\}.
\label{eq:K_m}
\end{align}
Orthonormalizing $\mathrm{K}_m$ yields the Krylov basis $\mathcal{B}_K=\{\ket{k_n}\}$ with $\langle{k_m}|{k_n}\rangle=\delta_{mn}$. For Hermitian $H$, the Lanczos algorithm \cite{LAN50} constructs the Krylov basis
through the three-term recurrence
\begin{align}
H\ket{k_n} = \mathrm{b}_{n+1}\ket{k_{n+1}} + \mathrm{a}_n\ket{k_n} + \mathrm{b}_n\ket{k_{n-1}},
\label{eq:H_kn}
\end{align}
where the Lanczos coefficients are determined by
\begin{align}
\mathrm{a}_n &= \bra{k_n}H\ket{k_n}, \nonumber \\
\mathrm{b}_{n+1} &= \left\|H\ket{k_n} - \mathrm{a}_n\ket{k_n} - \mathrm{b}_n\ket{k_{n-1}}\right\|.
\label{eq:a_b_n}
\end{align}
Using the orthonormality of the Krylov basis, the Lanczos recurrence relation (\cref{eq:H_kn}), and the Hermiticity of $H$, any Hamiltonian takes a tridiagonal form in $\mathcal{B}_K$ and can be rewritten as a tight-binding model \cite{NAN25}:
\begin{align*}
H = \sum_n \mathrm{a}_n\ket{k_n}\bra{k_n} + \mathrm{b}_{n+1}\left(\ket{k_{n+1}}\bra{k_n} + \ket{k_n}\bra{k_{n+1}}\right).
\end{align*}
Given an orthonormal basis (ONB) $\mathcal{B}=\{\ket{b_n}\}$, the Krylov spread complexity $\mathcal{C}(\mathcal{B}, t)$ is defined as
\begin{align}
\mathcal{C}(\mathcal{B},t)=\sum_n n\, \abs{\kappa_n(t)}^2 , \qquad \kappa_n(t)=\langle b_n|\psi(t)\rangle.
\label{eq:krylov_complexity}
\end{align}
The behavior of Krylov spread complexity is separated into early, medium, and late times, where early times see a fast increase and late times exhibit saturation, which is due to the state $\ket{\psi(t)}$ spreading across the full basis $\mathcal{B}$ \cite{PAR19, BAL22}. Therefore, one aims to minimize early- to medium-time complexity before the saturation time $t<T_{\mathrm{sat}}$. Balasubramanian \textit{et al.} argued that the Krylov basis minimizes early-time spread complexity, suggesting $\mathcal{B}_{\min} = \mathcal{B}_K$, thus further motivating the use of Krylov complexity from both a physical and an information-theoretic standpoint \cite{BAL22}.
For the Krylov basis $\mathcal{B}_K$, differentiating $i\kappa_n(t)=i\bra{k_n}{\psi(t)}\rangle$ with respect to time and using the Schr\"odinger equation (\cref{eq:schroedinger}), together with \cref{eq:H_kn} and the definition of $\kappa_n(t)$ in \cref{eq:krylov_complexity}, yields
\begin{align}
i\,\partial_t \kappa_n(t)
= b_n\,\kappa_{n-1}(t) + a_n\,\kappa_n(t) + b_{n+1}\,\kappa_{n+1}(t).
\label{eq:kappa_t}
\end{align}

\textit{First-order generator---}In the following, we motivate the use of the Krylov basis from the perspective of dynamical generators. For a short time step $\Delta t$, the evolution satisfies
\begin{align*}
\ket{\psi(\Delta t)} = e^{-iH\Delta t}\ket{\psi_0} \approx (I - iH\Delta t)\ket{\psi_0},
\end{align*}
which suggests the first-order generator
\begin{align*}
G^{(1, \Delta t)} &= I - iH\Delta t,\qquad \ket{g_0^{(1)}}=\ket{\psi_0}, \nonumber \\
\ket{g_{n+1}^{(1)}} &= G^{(1, \Delta t)}\ket{g_n^{(1)}}.
\end{align*}
For notational simplicity, we omit the $\Delta t$ dependence of the states $\ket{g_n^{(1)}}$.
When constructing an orthonormal basis from the sequence $\{\ket{g_n^{(1)}}\}$, the component proportional to $\ket{g_n^{(1)}}$ does not produce a new direction, since it lies entirely within the span of previously constructed vectors. The only part of $G^{(1, \Delta t)}\ket{g_n^{(1)}}$ that can generate a new direction is therefore the term proportional to $H\ket{g_n^{(1)}}$, giving
\(
\ket{g_{n+1}^{(1)}} \;\propto\; -i\Delta t\,H\ket{g_n^{(1)}}.
\)
After normalization, the scalar factor $\Delta t$ can be omitted, while the squared amplitudes used in the definition of Krylov complexity, $|\kappa_n(t)|^2$ (see \cref{eq:krylov_complexity}), are invariant under the global phase $-i$. Consequently, the effective generator reduces to
$
G^{(1, \Delta t)} = H.
$
This shows that the first-order generator exhibits the same Krylov spread complexity as the standard Krylov basis. This naturally motivates extending the framework of Krylov complexity to higher-order generators in the next step.

\textit{Higher-Order Generators---}
We assume that for any generator the first element is always the initial state, i.e.\ $\ket{g^{(p)}_0}=\ket{\psi_0}$, where $p$ indicates the order of the generator. A second-order approximation for small $\Delta t$ is
\begin{align*}
\ket{\psi(\Delta t)} \approx \left(I - i\Delta t\,H + \frac{(-i\Delta t)^2}{2}H^2\right)\ket{\psi_0},
\end{align*}
which leads to the generator
\begin{align*}
G^{(2, \Delta t)} &= I - i\Delta t\,H + \frac{(-i\Delta t)^2}{2}H^2, \nonumber \\
\ket{g_{n+1}^{(2)}} &= G^{(2, \Delta t)}\ket{g_n^{(2)}}.
\end{align*}
Applying the same argument as for the first-order generator, we remove the identity $I$ and one power of $-i\Delta t$, yielding $G^{(2,\Delta t)} = H - i\frac{\Delta t}{2}H^2$, which explicitly depends on $\Delta t$. After orthonormalization, the resulting space is represented as
$
\mathrm{K}_m^{(2,\Delta t)} = \mathrm{Span}\{\ket{k_0^{(2)}},\ldots,\ket{k_{m-1}^{(2)}}\}.
$ 
This generalizes to any order $p$ via
\begin{align*}
G^{(p,\Delta t)} &= \sum_{k=0}^{p} \frac{(i\Delta t\,H)^k}{k!}, \\
\ket{g_{n+1}^{(p)}} &= G^{(p,\Delta t)} \ket{g_n^{(p)}} .
\end{align*}
After orthonormalization we denote the basis by $\mathcal{B}_m^{(p,\Delta t)}$ and the corresponding space by $\mathrm{K}^{(p,\Delta t)}_m$, where all generators with $p>1$ depend on a time $\Delta t$. A particularly important case is the infinite-order generator
\begin{align*}
G^{(\infty, \Delta t)}&=e^{iH\Delta t}=U^{\Delta t}, \nonumber \\
\ket{g_{n+1}^{(\infty)}}&=U^{\Delta t}\ket{g_n^{(\infty)}},
\end{align*} 
recovering unitary time evolution, which mainly finds use in driven system \cite{NIZ23}, trotter circuits \cite{CIN24a, SUC25} and quantum machine learning\cite{CIN25, CIN25a}. Thus, this framework interpolates between the Hermitian first-order generator \cite{PAR19, BAL22} and the fully unitary infinite-order generator \cite{SUC25, CIN24a}. 
We can now define Krylov complexity constructed by the $p$th-order generator $G^{(p)}{(\Delta t)}$ as
\begin{align}
\mathcal{C}^{(p, \Delta t)}(t)=\sum_n n\, |\kappa^{(p)}_n(t)|^2 , \quad 
\kappa^{(p)}_n(t)=\langle k^{(p)}_n|\psi(t)\rangle .
\label{eq:krylov_complexity_extended}
\end{align}
The following theorem establishes one of the central results of this letter, namely that the Krylov basis does not minimize Krylov spread complexity. We prove this by contradiction, showing that the claim fails for a Krylov space of grade $m=3$ and then make a statement for any grade $m$.
\pagebreak
\begin{theorem}\label{theorem_1}
Let $H \in \mathbb{C}^{N\times N}$ be a Hamiltonian with Krylov grade $m=3$. Consider the orthonormal basis $\mathcal{B}_K = \{\ket{k_0}, \ket{k_1}, \ket{k_2}\}$ obtained by orthonormalizing the vectors $\{\ket{\psi_0},\, H\ket{\psi_0},\, H^2\ket{\psi_0}\}$, and fix a time $\tau$. 
The infinite-order generator induces the orthonormal basis $\mathcal{B}_U = \{\ket{g_0}, \ket{g_1}, \ket{g_2}\}$ obtained by orthonormalizing the vectors $\{\ket{\psi_0},\, U^{\Delta t}\ket{\psi_0},\, U^{2\Delta t}\ket{\psi_0}\}$, where $U^{\Delta t} = e^{-iH\Delta t}$. 
Then, for any $\tau$ there exists $\Delta t$ such that $\mathcal{C}^{(\infty,\Delta t)}(\tau) < \mathcal{C}^{(1,\Delta t)}(\tau).$
\end{theorem}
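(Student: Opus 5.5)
The plan is to choose the time step of the infinite-order generator equal to the target time, $\Delta t=\tau$. Then the second generated vector is the evolved state itself, and the state at $\tau$ has no weight on the third basis vector. We then compare the complexity with the Krylov basis term by term.

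\emph{Step 1 (a common first coefficient).} Both $\mathcal{B}_K$ and $\mathcal{B}_U$ are obtained by Gram--Schmidt starting from $\ket{\psi_0}$, so $\ket{k_0}=\ket{g_0}=\ket{\psi_0}$. Hence $|\kappa_0(\tau)|^2=|\langle\psi_0|\psi(\tau)\rangle|^2$ is the same in both bases. Normalization of $\ket{\psi(\tau)}$ within the three-dimensional space $\mathrm{K}_3$ gives, for either basis,
\begin{align*}
\mathcal{C}(\mathcal{B},\tau)=|\kappa_1|^2+2|\kappa_2|^2=\bigl(1-|\kappa_0|^2\bigr)+|\kappa_2|^2 .
\end{align*}
So minimizing the complexity at $\tau$ among bases starting with $\ket{\psi_0}$ amounts to minimizing $|\kappa_2(\tau)|^2$.

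\emph{Step 2 (choice of $\Delta t$).} With the convention $U^{\Delta t}=e^{-iH\Delta t}$ of \cref{theorem_1}, set $\Delta t=\tau$. With the convention $e^{+iH\Delta t}$ used earlier, set $\Delta t=-\tau$ instead. Then $U^{\Delta t}\ket{\psi_0}=\ket{\psi(\tau)}$, so $\ket{g_1}\propto\ket{\psi(\tau)}-\langle\psi_0|\psi(\tau)\rangle\ket{\psi_0}$. It follows that $\ket{\psi(\tau)}\in\mathrm{Span}\{\ket{g_0},\ket{g_1}\}$, the coefficient $\kappa_2^{(\infty)}(\tau)$ vanishes, and
\begin{align*}
\mathcal{C}^{(\infty,\Delta t)}(\tau)=1-|\langle\psi_0|\psi(\tau)\rangle|^2,
\end{align*}
which is the smallest value attainable by Step 1. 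We also have to check that $U^{2\Delta t}\ket{\psi_0}$ is independent of the first two vectors, so that $\mathcal{B}_U$ really spans $\mathrm{K}_3$. This should follow generically from grade $3$: the three eigencomponents $e^{-iE_j\Delta t}$ must be pairwise distinct, and the contrary case can be avoided by perturbing $\Delta t$.

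\emph{Step 3 (strictness).} The result is $\mathcal{C}^{(1)}(\tau)-\mathcal{C}^{(\infty,\tau)}(\tau)=|\langle k_2|\psi(\tau)\rangle|^2$, so we need $\kappa_2^{K}(\tau)\neq 0$. This is the main obstacle. The function $\kappa_2^K(t)$ is analytic. It is not identically zero: otherwise differentiating twice at $t=0$ gives $\langle k_2|H^2\psi_0\rangle=0$, which contradicts grade $3$. Its zeros are therefore isolated. This includes $\tau=0$, where both complexities vanish.

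At a zero $\tau^\ast$ we have $\ket{\psi(\tau^\ast)}\in\mathrm{Span}\{\ket{k_0},\ket{k_1}\}$. By Step 1 the Krylov basis is then already optimal at $\tau^\ast$, so no basis starting from $\ket{\psi_0}$ can do strictly better. The inequality should therefore be established for all $\tau$ outside this discrete exceptional set, and there it holds strictly with the explicit gap $|\kappa_2^K(\tau)|^2>0$. I would state this caveat explicitly, including the trivial case $\tau=0$. I would also note that the construction extends to grade $m$ by the same argument: the complexity becomes $\sum_n n|\kappa_n|^2$, and the state is fully concentrated on $\ket{g_0},\ket{g_1}$.
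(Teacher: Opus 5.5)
Your proof has the same skeleton as the paper's. You set $\Delta t=\tau$, use $\ket{k_0}=\ket{g_0}=\ket{\psi_0}$ and $\kappa^{(\infty)}_2(\tau)=0$, and reduce everything to $\mathcal{C}^{(1)}(\tau)-\mathcal{C}^{(\infty,\tau)}(\tau)=|\kappa^{(1)}_2(\tau)|^2$. Your formula $\mathcal{C}=1-|\kappa_0|^2+|\kappa_2|^2$ is a tidier version of the paper's normalization bookkeeping.

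The two routes split at the strictness step. The paper Taylor-expands the Lanczos equations to get $\kappa^{(1)}_2(t)=-\tfrac{b_1b_2}{2}t^2+\mathcal{O}(t^3)$. This gives an explicit gap $\tfrac{1}{4}b_1^2b_2^2\tau^4+\mathcal{O}(\tau^5)$, but only for sufficiently small $\tau>0$. So the paper's argument does not actually reach the ``for any $\tau$'' of the statement. Your analyticity argument (second derivative $-\langle k_2|H^2|\psi_0\rangle=-b_1b_2\neq0$, hence isolated zeros) covers every $\tau$ off a discrete set. Your observation that at a zero of $\kappa^{(1)}_2$ no basis starting with $\ket{\psi_0}$ can beat $\mathcal{B}_K$ correctly shows that the statement needs a caveat; it fails at least at $\tau=0$.

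You can pin the exceptional set down exactly. Let $E_1<E_2<E_3$ be the eigenvalues populated by $\ket{\psi_0}$. The two orthogonality conditions defining $\ket{k_2}$ force $\kappa^{(1)}_2(t)\propto(E_2-E_3)e^{-iE_1t}+(E_3-E_1)e^{-iE_2t}+(E_1-E_2)e^{-iE_3t}$. Since $E_3-E_1$ equals the sum of the other two moduli, the triangle inequality shows this vanishes only if $e^{-iE_1\tau}=e^{-iE_2\tau}=e^{-iE_3\tau}$. That is, the zeros are exactly the revivals $\ket{\psi(\tau)}\propto\ket{\psi_0}$, where both complexities are zero. For incommensurate spectra the inequality therefore holds for every $\tau\neq0$.

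In Step~2, perturbing $\Delta t$ breaks the exact identity $U^{\Delta t}\ket{\psi_0}=\ket{\psi(\tau)}$, so you would need a continuity argument you have not given. A simpler fix is available. Inside the three-dimensional $\mathrm{K}_3$, $\ket{g_2}$ is fixed up to phase by $\ket{g_0}$ and $\ket{g_1}$. So if two phases $e^{-iE_j\tau}$ coincide, just complete the basis, and $\kappa^{(\infty)}_2(\tau)=0$ still holds.
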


\begin{proof}
We begin by choosing $\Delta t=\tau$ for the construction of the generator and evaluate the spread at $\ket{\psi(\tau)}=\ket{\psi(\Delta t)}$. 
For both bases $\ket{k_0}=\ket{g_0}=\ket{\psi_0}$ holds. 
Expanding the time evolution yields
\begin{align*}
\ket{\psi(\tau)}
= \sum_{n=0}^2 \kappa_n^{(1)}(\tau)\ket{k_n}
= \sum_{n=0}^2 \kappa^{(\infty)}_n(\tau)\ket{g_n},
\end{align*}
with coefficients
$\kappa^{(1)}_n(\tau)=\braket{k_n|\psi(\tau)}$ 
and
$\kappa^{(\infty)}_n(\tau)=\braket{g_n|\psi(\tau)}.$
Since $\ket{k_0}=\ket{g_0}=\ket{\psi_0}$, it follows that 
$\kappa^{(1)}_0(\tau)=\kappa^{(\infty)}_0(\tau)$.
By construction of the basis $\mathcal{B}_U$, we use $\ket{\psi_0}$ and $U^{\tau}\ket{\psi_0}$ to obtain the first two basis elements, so at exactly this time the overlap with the third element vanishes, i.e.
$
|\kappa^{(\infty)}_2(\tau)|^2 = 0.
$
In contrast, $\ket{k_2}$ is fixed by the Hamiltonian and exists for all $t>0$, and we now show that the corresponding component $\kappa^{(1)}_2(\tau)$ is nonzero.
For $m=3$ the amplitudes according to \cref{eq:kappa_t} satisfy 
\begin{align}
i\,\dot{\kappa}^{(1)}_0(t) &= a_0\,\kappa^{(1)}_0(t) + b_1\,\kappa^{(1)}_1(t), \label{eq:phi0app}\\
i\,\dot{\kappa}^{(1)}_1(t) &= b_1\,\kappa^{(1)}_0(t) + a_1\,\kappa^{(1)}_1(t) + b_2\,\kappa^{(1)}_2(t), \label{eq:phi1app}\\
i\,\dot{\kappa}^{(1)}_2(t) &= b_2\,\kappa^{(1)}_1(t) + a_2\,\kappa^{(1)}_2(t). \label{eq:phi2app}
\end{align}
The initial conditions are $\kappa^{(1)}_0(0)=1$, $\kappa^{(1)}_1(0)=0$ and $\qquad \kappa^{(1)}_2(0)=0$. Because the Krylov grade is $3$, we have $b_1>0$ and $b_2>0$.
Evaluating Eq.~\eqref{eq:phi1app} at $t=0$ gives $i\,\dot{\kappa}^{(1)}_1(0)=b_1$, and therefore $\dot{\kappa}^{(1)}_1(0)=-i b_1\neq 0$ (\cref{eq:a_b_n}). This implies $\kappa^{(1)}_1(t)=-i b_1 t+\mathcal{O}(t^2)$, so $\kappa^{(1)}_1(t)\neq 0$ for sufficiently small $t=\tau>0$.
To discuss why $\kappa^{(1)}_2(\tau) \neq 0$, we consider the Taylor expansion around $t=0$.  
From Eq.~\eqref{eq:phi2app}, we have
$i\,\dot{\kappa}^{(1)}_2(0)=b_2 \kappa^{(1)}_1(0) + a_2 \kappa^{(1)}_2(0)=0$,  
and hence $\dot{\kappa}^{(1)}_2(0)=0$.   Next, we compute its second-order expansion around $t=0$.  
Differentiating Eq.~\eqref{eq:phi2app} gives
\(
i\,\ddot{\kappa}^{(1)}_2(t) = b_2\,\dot{\kappa}^{(1)}_1(t) + a_2\,\dot{\kappa}^{(1)}_2(t),
\)
and evaluating at $t=0$ yields
\(
i\,\ddot{\kappa}^{(1)}_2(0) = b_2(-i b_1),
\)
so that $\ddot{\kappa}_2(0) = -b_1 b_2 \neq 0$. Therefore,
\[
\kappa^{(1)}_2(t)
= -\frac{b_1 b_2}{2}\, t^2 + \mathcal{O}(t^3),
\]
which shows that $\kappa^{(1)}_2(t)\neq 0$ for all sufficiently small $t=\tau>0$.
Since $\kappa^{(\infty)}_2(\tau)=0$ and $\kappa^{(1)}_2(\tau)\neq 0$, while 
$|\kappa^{(1)}_0(\tau)| = |\kappa^{(\infty)}_0(\tau)|$, normalization implies
\begin{align*}
|\kappa^{(\infty)}_1(\tau)|^2 &= 1 - |\kappa^{(1)}_0(\tau)|^2, \nonumber\\
|\kappa^{(1)}_1(\tau)|^2 &= 1 - |\kappa^{(1)}_0(\tau)|^2 - |\kappa^{(1)}_2(\tau)|^2.
\end{align*}
Because $|\kappa^{(1)}_2(\tau)|^2>0$, we obtain
\(
|\kappa^{(1)}_1(\tau)|^2 < |\kappa^{(\infty)}_1(\tau)|^2.
\)
The corresponding Krylov complexities are
\begin{align*}
\mathcal{C}^{(1, \tau)}(\tau)
&= \sum_{n=0}^2 n\,|\kappa^{(1)}_n(\tau)|^2
= |\kappa^{(1)}_1(\tau)|^2 + 2|\kappa^{(1)}_2(\tau)|^2, \nonumber\\
\mathcal{C}^{(\infty, \tau)}(\tau)
&= \sum_{n=0}^2 n\,|\kappa^{(\infty)}_n(\tau)|^2
= |\kappa^{(\infty)}_1(\tau)|^2 .
\end{align*}
Using $|\kappa^{(\infty)}_1(\tau)|^2 = |\kappa^{(1)}_1(\tau)|^2 + |\kappa^{(1)}_2(\tau)|^2$, we obtain
\(
\mathcal{C}^{(1,\tau)}(\tau) > \mathcal{C}^{(\infty,\tau)}(\tau),
\)
which establishes the desired inequality by choosing $\Delta t=\tau$.
\end{proof}
This proof can be extended to any order $m$, where for every time $\tau =\Delta t$ one can define the infinite-order generator $\mathrm{G}^{(\infty, \Delta t)}$, which always exhibits $\kappa^{(\infty)}_{n>2}(\Delta t)=0$, whereas this does not hold for the first-order generator $\mathrm{G}^{(1)}$. This result therefore directly disproves the widely accepted assumption that the first-order generator minimizes spread complexity \cite{BAL22}.
\begin{figure}[t]
    \centering
    \includegraphics[scale = 1]{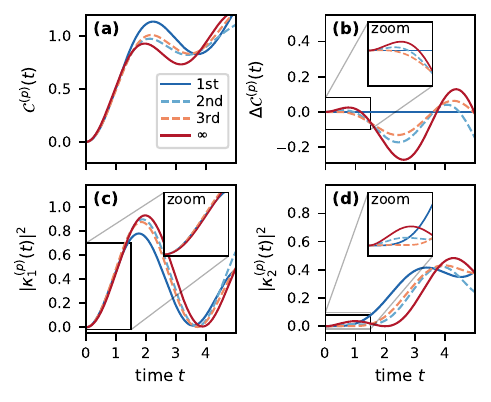}
    \caption{Spread complexity for a $3\times 3$ GUE Hamiltonian using higher-order Krylov generators. \textbf{a)} $\mathcal{C}^{(p, \Delta t)}(t)$ for orders $p=1,2,3,\infty$. \textbf{b)} Differences $\Delta\mathcal{C}^{(p, \Delta t)}(t)$ with a zoomed inset. Krylov amplitudes $|\kappa^{(p)}_{1}(t)|^{2}$ and $|\kappa^{(p)}_{2}(t)|^{2}$ in \textbf{c)} and \textbf{d)}.    }
    \label{fig:GUE3}
\end{figure}

What remains to be analyzed is the behavior in the interval $t\in[0,\Delta t]$. 
\Cref{fig:GUE3} shows results for a Hamiltonian $H\in\mathbb{C}^{3\times 3}$ sampled from a Gaussian unitary ensemble. We normalize such that $\sigma_{\max}(H)=1$ and choose $\Delta t=2$. In \cref{fig:GUE3}(a) we plot the Krylov spread complexity $\mathcal{C}^{(p)}(t)$ for the first-order (blue), second-order (light blue, dashed), third-order (orange, dashed), and infinite-order (red) generators. All complexities $\mathcal{C}^{(p)}(t)$ show an initial rise until $t\approx 1.6$, after which $\mathcal{C}^{(1)}(t)$ becomes the largest.
To highlight this behavior, \cref{fig:GUE3}(b) shows the difference $\Delta \mathcal{C}^{(p)}(t)=\mathcal{C}^{(p)}(t)-\mathcal{C}^{(p)}(t)$. Initially, higher-orders exhibit a slight increase above zero, indicating a small temporary enlargement of complexity. A zoom reveals that this increase disappears around $t\approx 1$, after which higher-order generators achieve smaller spread complexity, with reductions up to $\Delta\mathcal{C}\approx 0.25$. For ($t\gtrsim 3.8$) the higher order complexities again exceed the first-order complexity. To understand the early-time behavior, we examine the amplitudes $|\kappa_1(t)|^2$ and $|\kappa_2(t)|^2$ in \cref{fig:GUE3}(c–d). As previously argued $\kappa^{(p)}_0(t)$ is identical for all generators. We find that $|\kappa^{(p>1)}_1(t)|^2>|\kappa^{(1)}_1(t)|^2$ at intermediate times ($t\lesssim 3.8$) and smaller at late times ($t\gtrsim 3.8$), matching the two complexity regimes. Moreover, since the infinite-order generator enforces $\kappa_2^{(\infty)}(\Delta t)=0$, the state develops a larger transient weight in the second Krylov direction during $t\in[0,\Delta t)$, i.e., $|\kappa_2^{(\infty)}(t)|^2 > |\kappa_2^{(1)}(t)|^2$, as observed in \cref{fig:GUE3}(d). This explains the small initial complexity increase. For small $\Delta t$, this increase is negligible and of order $10^{-6}$. Therefore, we next discuss how to choose $\Delta t$ in a physically meaningful way to quantify the dynamical behavior.


\textit{Choosing the Time Step $\Delta t$---} 
A natural question is how $\Delta t$ should be selected to faithfully capture the dynamical behavior. 
To address this, we derive a characteristic timescale from the scrambling in the standard Krylov basis and scale $\Delta t$ accordingly.
To identify an appropriate value of $\Delta t$, we estimate when scrambling sets in by comparing consecutive Dyson terms $A_n(T)=(-iHT)^n/n!$ appearing in the expansion $\ket{\psi(T)}=\sum_{n=0}^{\infty}A_n(T)\ket{\psi_0}$.
Scrambling sets in when the $m$-th Dyson term—i.e., the first term outside the Krylov space—becomes comparable in norm to the $(m-1)$-st term, namely when
\begin{align}
{\|A_{m-1}(\tau_{\mathrm{scr}})\|} = {\|A_m(\tau_{\mathrm{scr}})\|}.
\label{eq:scrambling_condition}
\end{align}
Using the recursion relation
$
A_m(\tau_{\mathrm{scr}}) = A_{m-1}(\tau_{\mathrm{scr}})\,\frac{-iH\tau_{\mathrm{scr}}}{m}
$, and the fact that $H$ is Hermitian (and therefore normal), we have
\begin{align*}
\|A_m(\tau_{\mathrm{scr}})\|
&= \left\| A_{m-1}(\tau_{\mathrm{scr}})\,\frac{\tau_{\mathrm{scr}}}{m}H \right\| \nonumber \\
&\leq \|A_{m-1}(\tau_{\mathrm{scr}})\|\,\|H\|\frac{\tau_{\mathrm{scr}}}{m}.
\end{align*}
Inserting this expression into the scrambling condition \eqref{eq:scrambling_condition} yields
$
1 \leq \,\|H\|{\tau_{\mathrm{scr}}}/m.
$
Thus the scrambling time satisfies
$
\tau_{\mathrm{scr}} \geq {m}/{\|H\|}.
$
While this bound holds for any operator norm, in the special case of the spectral norm $\|A\| = \sigma_{\max}$, where $\sigma_{\max}$ is the largest singular value, we obtain
\begin{align}
\tau_{\mathrm{scr}} = \frac{m}{\|H\|}.
\label{eq:tau_scrambling}
\end{align}
Since the generator is applied $m$ times in approximating the time evolution, the scrambling time $\tau_{\mathrm{scr}}$ and the time step $\Delta t$ must satisfy
$
m\,\Delta t_{\mathrm{scr}} = \tau_{\mathrm{scr}},
$
which implies
$
\Delta t_{\mathrm{scr}} = \frac{\tau_{\mathrm{scr}}}{m} = \frac{1}{\|H\|}.
$
We index $\Delta t_{\mathrm{scr}}$ to emphasize that other choices are possible. For $\Delta t > \Delta t_{\mathrm{scr}}$, scrambling is observed in the Krylov basis for the standard approach, which depends only on the spectral statistics of the Hamiltonian.
To reflect a more realistic setting, we discuss in the following the influence of higher-order generators on Krylov complexity and tridiagonality for a larger system size as a function of the generator time scaled to the scrambling time, $\Delta t = \alpha \Delta t_{\mathrm{scr}}$.

\textit{Results---}
We begin by considering a random Hamiltonian $H\in\mathbb{C}^{N\times N}$ drawn from the Gaussian unitary ensemble with $N=50$, where the initial state $\ket{\psi_0}$ is a superposition of all eigenstates. For each realization, we construct higher-order Krylov generators of orders $p=1,2,3, \infty$ using the same construction and color coding as in \cref{fig:GUE3}, for a chosen time step $\Delta t$.
\begin{figure}[t]
    \centering
    \includegraphics[scale = 1]{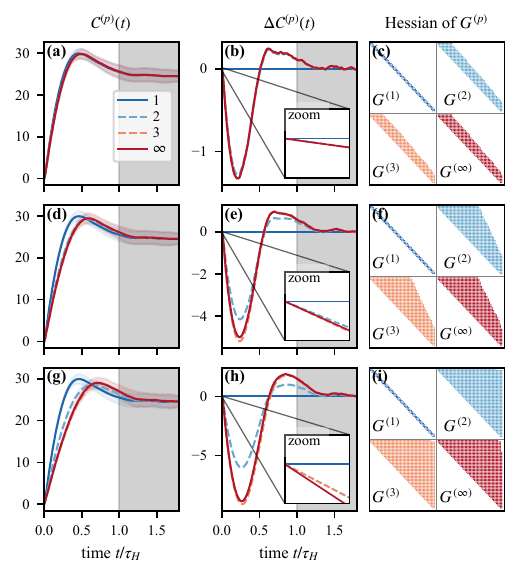}
    \caption{Spread complexity $\mathcal{C}^{(p)}(t)$ (first column), its deviation from first order $\Delta \mathcal{C}^{(p)}(t)$ (second column), and the Hessian (third column) for generators of order $p=1$ (blue), $p=2$ (dashed light blue), $p=3$ (dashed orange), and $p=\infty$ (red) for a $50\times50$ GUE Hamiltonian. Rows correspond to $\Delta t = 0.2\cdot\Delta t_{\mathrm{scr}}$ (a–c), $\Delta t = \Delta t_{\mathrm{scr}}$ (d–f), and $\Delta t = 1.5\cdot\Delta t_{\mathrm{scr}}$ (g–i). The grey region marks $t>\tau_H$, where $\mathcal{C}^{(p)}(t)$ approximately saturates.}
    \label{fig:GUE_50}
\end{figure}

Figure~\ref{fig:GUE_50} summarizes the behavior of higher-order Krylov generators for increasing time steps $\Delta t = 0.2\cdot\Delta t_{\mathrm{scr}}$ (a,b,c), $\Delta t = \Delta t_{\mathrm{scr}}$ (d,e,f), and $\Delta t = 1.5\cdot\Delta t_{\mathrm{scr}}$ (g,h,i).  
The time axis is normalized by the Heisenberg time $\tau_H = 2\pi/\overline{s}$, where $\overline{s}$ denotes the average level spacing of the ordered eigenvalues $H\ket{\phi_n}=\varepsilon_n\ket{\phi_n}$, given by $\overline{s}=\frac{1}{N}\sum_n(\varepsilon_{n+1}-\varepsilon_n)$.
The scrambling and Heisenberg times are $\tau_{\mathrm{scr}} = 50\Delta t = 0.037$ and $\tau_H = 5.836$, indicating that scrambling occurs well before saturation, consistent with the Heisenberg time scale. 
The first column (\cref{fig:GUE_50}(a,b,c)) shows a decrease in spread complexity when higher-order generators and larger values of $\Delta t$ are used. \cref{fig:GUE_50}(d,e,f) plots the deviations relative to the first-order generator, making this effect more transparent.  
For every $\Delta t$, we find that all higher-order Krylov complexities exhibit smaller spread complexity for early times $t/\tau_H \lesssim 0.6$. 
This is followed by a short region $0.6 \lesssim t/\tau_H \lesssim 1.3$ where they briefly exceed the first-order value, before all generators converge to the same late-time average for $t/\tau_H \gtrsim 1.2$. The grey region indicates $t>\tau_H$, which is often associated with the late-time regime.
Unlike the $3\times 3$ example in Fig.~\ref{fig:GUE3}, no initial increase in complexity is observed upon zooming in. This is due to the much larger ratio $\tau_H/\tau_{\mathrm{scr}}=157$, whereas in the three-dimensional case $\tau_H/\tau_{\mathrm{scr}}\approx 1$. However, such a definition is not meaningful for a matrix with only three eigenvalues, as it depends strongly on the random seed. The previous example was chosen to illustrate situations in which higher-order generators exhibit larger spread complexity, thereby guiding the motivation for physically grounded design of $\Delta t$.
\cref{fig:GUE_50}(c,f,i) shows the Hessians associated with each generator, displaying entries with magnitude $>10^{-6}$. 
The first-order generator is strictly tridiagonal and independent of $\Delta t$.
For $\Delta t=0.2\cdot \Delta t_{\mathrm{scr}}$, the higher-order Hessians deviate only slightly from this structure, with coefficients confined to a narrow band corresponding to short-range hopping in the Krylov basis, as illustrated by the comparison of the top-left panel (c) with the other panels in (c).
At $\Delta t=\Delta t_{\mathrm{scr}}$, the Hessians develop a broader triangular support, indicating that lower Krylov levels couple to several forward directions. For $\Delta t = 1.5\Delta t_{\mathrm{scr}}$ (last row in \cref{fig:GUE_50}), the Hessians become fully populated in the upper region, corresponding to an effective tight-binding model where each Krylov basis element couples to all forward directions, as also observed for Trotter circuits \cite{SUC25} and time-dependent generators \cite{TAK25a}.
 
\textit{Conclusion---}
This work shows that Krylov spread complexity can be viewed as a measure arising from the first-order generator of time evolution. We extend this perspective to time-dependent higher-order generators, where the order of the generator serves as a parameter interpolating between the first-order Hermitian generator \cite{BAL22} and the unitary infinite-order generator \cite{SUC25, CIN24a}.
For a system of Krylov grade $m=3$, we analytically disprove the widely held assumption that Krylov complexity is minimized by the Krylov space generated by the powers of the Hamiltonian $\{H^n\ket{\psi_0}\}_n$ for arbitrary time $\tau$ \cite{BAL22} and further support these findings through numerical simulations of Gaussian unitary ensembles. This result calls for reconsidering the information-theoretic motivation of Krylov spread complexity.  
We then derive a natural timescale by matching the scrambling of the first-order generator and use it to construct the higher-order generators, depending only on the spectral statistics of the Hamiltonian. Lastly, we analyze higher-order generators with respect to the scrambling time for larger system sizes, where our results show smaller complexity when higher-order generators are used and a loss of tridiagonality in the orthonormalization-induced tight-binding model.


\acknowledgments
We thank Zhuo-Yu Xian, Isaac Tesfaye, and Yukitoshi Motome for fruitful discussions and advice.

\appendix


\bibliography{lit}

\end{document}